\newcommand{\lrb}[1]{\left[ #1 \right]}
\newcommand{\lp}{\left(}
\newcommand{\rp}{\right)}
\providecommand{\kron}{\otimes}
\newcommand{\bmat}[1]{\lrb{ \begin{array}{c} #1 \end{array} }}
\providecommand{\eye}{\textbf{I}}
\providecommand{\mA}{\ensuremath{\textbf{A}}}
\providecommand{\mD}{\ensuremath{\textbf{D}}}
\providecommand{\mG}{\ensuremath{\textbf{G}}}
\providecommand{\mI}{\ensuremath{\textbf{I}}}
\providecommand{\mP}{\ensuremath{\textbf{P}}}
\providecommand{\mS}{\ensuremath{\textbf{S}}}
\providecommand{\mX}{\ensuremath{\textbf{X}}}
\providecommand{\ve}{\ensuremath{\textbf{e}}}
\providecommand{\vf}{\ensuremath{\textbf{f}}}
\providecommand{\vh}{\ensuremath{\textbf{h}}}
\providecommand{\vp}{\ensuremath{\textbf{p}}}
\providecommand{\vr}{\ensuremath{\textbf{r}}}
\providecommand{\vs}{\ensuremath{\textbf{s}}}
\providecommand{\vv}{\ensuremath{\textbf{v}}}
\providecommand{\vx}{\ensuremath{\textbf{x}}}
\providecommand{\vy}{\ensuremath{\textbf{y}}}
\providecommand{\vz}{\ensuremath{\textbf{z}}}
\newcommand{\iv}{^{-1} }
\colorlet{TufteRed}{red!80!black}
\definecolor{halfgray}{gray}{0.55}
\definecolor{subtleblue}     {rgb}{0.02,0.04,0.48}
\definecolor{subtlered}      {rgb}{0.65,0.04,0.07} 
\definecolor{subtlegreen}    {rgb}{0.06,0.44,0.08}
\definecolor{subtledarkblue} {rgb}{0,.1,.6}
\definecolor{lightsubtleblue}{rgb}{0,.4,.6}
\definecolor{ecru}           {rgb}{1.0,.98823,.95686}   
\definecolor{stanfordred}      {rgb}{0.6431,0.000,0.1137} 
\definecolor{stanfordsandstone}{rgb}{0.9059,0.8196,0.6039}
\definecolor{stanfordblue}     {rgb}{0.1451,0.5176,0.7333}
\definecolor{stanfordgreen}    {rgb}{0.1608,0.3961,0.2863} 
\definecolor{stanforddarkgray} {rgb}{0.2627,0.2902,0.2667} 
\definecolor{stanfordlightgray}{rgb}{0.8392,0.8667,0.8275} 
\definecolor{stanforddarkgreen}{rgb}{0.2353,0.2118,0.1373}
\definecolor{stanforddeepred}  {rgb}{0.6510,0.2275,0.0000}
\definecolor{stanfordneutralkhaki}{rgb}{0.5686,    0.5333,    0.4510}
\definecolor{stanfordbrightgreen}{rgb}{0.0039 ,   0.5137  ,  0.3725}
\definecolor{stanfordbrightblue}{rgb}{0.1451,0.5176, 0.7333} 
\definecolor{stanfordbrightseagreen}{rgb}{0.0000,0.5020,0.5529} 
\definecolor{stanfordbrightyellow}{rgb}{0.9412,0.6863,0.0000} 
\definecolor{stanfordbrightwine}{rgb}{0.2353,0.0667,0.0275}
\newcommand{\PreserveBackslash}[1]{\let\temp=\\#1\let\\=\temp}
\newcolumntype{L}[1]{>{\PreserveBackslash\RaggedRight}m{#1}}
\newcolumntype{M}[1]{>{\PreserveBackslash\RaggedRight}p{#1}}
\newcolumntype{R}[1]{>{\PreserveBackslash\RaggedLeft}m{#1}}
\newcolumntype{S}[1]{>{\PreserveBackslash\RaggedLeft}p{#1}}
\newcolumntype{Z}[1]{>{\PreserveBackslash\Centering}m{#1}}
\newcolumntype{A}[1]{>{\PreserveBackslash\Centering}p{#1}}
\newcolumntype{U}{>{\setlength{\RaggedRightParindent}{0pt}\RaggedRight\arraybackslash\noindent}X}
\newcolumntype{V}{>{\RaggedLeft\arraybackslash}X}
\newcolumntype{W}{>{\Centering\arraybackslash}X}
\lstdefinelanguage{matlabfloz}{%
  alsoletter={...},%
  morekeywords={
  break,case,catch,continue,elseif,else,end,for,function,global,%
  if,otherwise,persistent,return,switch,try,while,...,ones,zeros,eye},%
  comment=[l]\%,
  morecomment=[l]...,
  morestring=[m]',
}[keywords,comments,strings]%
\newtheorem{theorem}{Theorem}
\newtheorem{lemma}{Lemma}
\newcommand{\mLambda}{\boldsymbol{\Lambda}}
\newcommand{\eps}{\varepsilon}
\newcommand{\hkre}{\texttt{hk-relax}\xspace}
\newcommand{\vol}{\text{vol}}
\newcommand{\expof}[1]{\exp\left\{ #1 \right\}}
\newcommand{\tnps}{T_N(t\mP)\vs}
\newcommand{\hvv}{\hat{\vv}}
\newcommand{\fullvec}[1]{\bmat{{#1}_0 \\ \vdots \\ {#1}_N}}
\newfont{\mycrnotice}{ptmr8t at 7pt}
\newfont{\myconfname}{ptmri8t at 7pt}
\begin{document}

\title{Heat Kernel Based Community Detection}

\numberofauthors{2}
\author{
\alignauthor
Kyle Kloster\\
       \affaddr{Purdue University}\\
       \affaddr{West Lafayette, IN}\\
       \email{kkloste@purdue.edu}
\alignauthor
David F. Gleich\\
       \affaddr{Purdue University}\\
       \affaddr{West Lafayette, IN}\\
       \email{dgleich@purdue.edu}
}
\date{12 February 2014\footnote{Minor correction posted 14 November 2016}}

\maketitle
\begin{abstract}
The heat kernel is a type of graph diffusion that, like the much-used personalized PageRank diffusion, is useful in identifying a community nearby a starting seed node. We present the first deterministic, local algorithm to compute this diffusion and use that algorithm to study the communities that it produces. Our algorithm is formally a relaxation method for solving a linear system to estimate the matrix exponential in a degree-weighted norm. We prove that this algorithm stays localized in a large graph and has a worst-case constant runtime that depends only on the parameters of the diffusion, not the size of the graph. On large graphs, our experiments indicate that the communities produced by this method have better conductance than those produced by PageRank, although they take slightly longer to compute. On a real-world community identification task, the heat kernel communities perform better than those from the PageRank diffusion.
\end{abstract}

\category{G.2.2}{Discrete mathematics}{Graph theory}[Graph algorithms]
\category{I.5.3}{Pattern recognition}{Clustering}[Algorithms]

\terms{Algorithms,Theory}

\keywords{heat kernel; local clustering}

\section{Introduction}

The community detection problem is to identify a set of nodes in a graph that are internally cohesive but also separated from the remainder of the network. One popular way to capture this idea is through the conductance measure of a set. We treat this idea formally in the next section, but informally, the conductance of a set is a ratio of the number of edges leaving the set to the number of edges touched by the set of vertices. If this value is small, then it indicates a set with many internal edges and few edges leaving.

In many surveys and empirical studies~\cite{Dhillon2007-graclus,Leskovec-2009-community-structure,Schaeffer-2007-clustering}, the conductance measure surfaces as one of the most reliable measures of a community. Although this measure has been critized for producing \emph{cavemen}-type communities~\cite{Kang-2011-Caveman}, empirical properties of real-world communities correlate highly with sets produced by algorithms that optimize conductance~\cite{Abrahao}. Furthermore, state-of-the-art methods for identifying overlapping sets of communities use conductance to find real-world communities better than any alternative~\cite{Whang-2013-overlapping}.

Virtually all of the rigorous algorithms that identify sets of small conductance are based on min-cuts~\cite{andersen2008-improve,Orecchia-2014-flow}, eigenvector computations~\cite{Alon-1985-cheeger,Fiedler1973-algebraic-connectivity}, or local graph diffusions~\cite{andersen2006-local,chung2007-pagerank-heat}. (One notable exception is the graclus method~\cite{Dhillon2007-graclus} that uses a relationship between kernel $k$-means and a variant of conductance.) In this paper, we study a new algorithm that uses a heat kernel diffusion~\cite{chung2007-pagerank-heat} to identify small-conductance communities in a network. (The heat kernel is discussed formally in Section~\ref{sec:diffusions}.)  Although the properties of this diffusion had been analyzed in theory in Chung's work~\cite{chung2007-pagerank-heat}, that work did not provide an efficient algorithm to compute the diffusion. Recently, Chung and Simpson stated a randomized Monte Carlo method to estimate the diffusion~\cite{chung2013solving}.

This paper introduces an efficient and deterministic method to estimate this diffusion. We use it to study the properties of the small conductance sets identified by the method as communities. For this use, a deterministic approach is critical as we need to differentiate between subtle properties of the diffusion. Our primary point of comparison is the well-known personalized PageRank diffusion~\cite{andersen2006-local}, which has been used to establish new insights into the properties of communities in large scale networks~\cite{Leskovec-2009-community-structure}. Thus, we wish to understand how the communities produced by the heat kernel compare to those produced by personalized PageRank.

The basic operation of our algorithm is a coordinate relaxation step. This has been used to efficiently compute personalized PageRank~\cite{andersen2006-local,jeh2003-personalized,mcsherry2005-uniform} where it is known as the ``push'' operation on a graph; the term ``coordinate relaxation'' is the classical name for this operation, which dates back to the Gauss-Seidel method. What distinguishes our approach from prior work is the use of coordinate relaxation on an implicitly constructed linear system that will estimate the heat kernel diffusion, which is formally the exponential of the random walk transition matrix.

We began looking into this problem recently in a workshop paper~\cite{Kloster-2013-nexpokit}, where we showed that this style of algorithm successfully estimates a related quantity. This paper tackles a fundamentally new direction and, although similar in techniques, required an entirely new analysis. In particular, we are able to estimate this diffusion in constant time in a degree-weighted norm that depends only on the parameters of the diffusion, and not the size of the graph. A Python implementation of our algorithm to accomplish this task is presented in Figure~\ref{fig:pseudocode}.

In the remainder of the paper, we first review these topics formally (Sections~\ref{sec:prelim}~and~\ref{sec:diffusions}); present our algorithm (Section~\ref{sec:alg}) and discuss related work (Section~\ref{sec:related}). Then we show how our new approach differs from and improves upon the personalized PageRank diffusion in synthetic and real-world problems.

\paragraph{Summary of contributions}
\begin{compactitem}[$\cdot$]
 \item We propose the first local, deterministic method to accurately compute a heat kernel diffusion in a graph. The code is simple and scalable to any graph where out-link access is \emph{inexpensive}.
 \item Our method is always localized, even in massive graphs, because it has a provably constant runtime in a degree-weighted norm.
 \item We compare the new heat kernel diffusion method to the venerable PageRank diffusion in synthetic, small, and large networks (up to 2 billion edges) to demonstrate the differences. On large networks such as Twitter, our method tends to produce smaller and tighter communities. It is also more accurate at detecting ground-truth communities.
\end{compactitem}\bigskip

\noindent We make our experimental codes available in the spirit of reproducible research:\\
{\fontsize{8}{10}\selectfont
\url{https://www.cs.purdue.edu/homes/dgleich/codes/hkgrow}}


%
%
%
%
%
%
%
%

\section{Preliminaries}\label{sec:notation}
\label{sec:prelim}
We begin by fixing our notation. Let $G=(V,E)$ be a simple, undirected graph with $n = |V|$ vertices. Fix an ordering of the vertices from $1$ to $n$ such that we can refer to a vertex by it's numeric ID. For a vertex $v \in V$, we denote by $d_v$ the degree of node $v$. Let $\mA$ be the associated adjacency matrix. Because $G$ is undirected, $\mA$ is symmetric. Furthermore, let $\mD$ be the diagonal matrix of degrees ($\mD_{ii} = d_i$) and $\mP = (\mD\iv\mA)^T = \mA \mD\iv$ be the random walk transition matrix. Finally, we denote by $\ve_i$ the vector (of an appropriate length) of zeros having a single 1 in entry $i$, and by $\ve$ the vector of all 1s.

\textbf{Conductance.} Given a subset $S\subseteq V$, we denote by $\vol(S)$ the sum of the degrees of all vertices in $S$, and by $\partial(S)$, the \emph{boundary} of $S$, the number of edges with one endpoint inside of $S$ and one endpoint outside of $S$. With this notation, the conductance of a set $S$ is given by
\[
\phi(S) := \frac{\partial(S)}{\min\{ \vol(S) , \vol(V - S) \}}
\]
Conceptually, $\phi(S)$ is the probability that a random walk of length one will land outside of $S$, given that we start from a node chosen uniformly at random inside $S$.

\textbf{Matrix exponential.} The heat kernel of a graph involves the matrix exponential, and we wish to briefly mention some facts about this operation. See Higham~\cite{Higham2008-functions-of-matrices} for a more in-depth treatment. Consider a general matrix $\mG$. The exponential function of a matrix is not just the exponential applied element-wise, but rather is given by substituting the matrix into the Taylor series expansion of the exponential function: 
\[ \expof{\mG} = \sum_{k=0}^\infty \frac{1}{k!} \mG^k. \]
That said, the exponential of a diagonal matrix \emph{is} the exponential function applied to the diagonal entries. This phenomenon occurs because powers of a diagonal matrix are simply the diagonal elements raised to the given power. For any matrix $\mG$, if $\mG^T \mG = \mG \mG^T$ (which is a generalization of the notion of a symmetric matrix, called a \textit{normal} matrix), then $\mG$ has an eigenvalue decomposition $\mG = \mX \mLambda \mX\iv$ and there is a simple, albeit inefficient, means of computing the exponential: $\expof{\mG} = \mX \expof{\mLambda} \mX\iv$. Computing the matrix exponential, or even the \emph{product} of the exponential with a vector, $\expof{\mG} \vz$, is still an active area of research~\cite{Al-Mohy-2011-exponential}.

\section{Finding small conductance \\ communities with diffusions}
\label{sec:communities}
\label{sec:diffusions}

A graph diffusion is any sum of the following form:
\[ \vf = \sum_{k=0}^\infty \alpha_k \mP^k \vs \]
where $\sum_{k} \alpha_k = 1$ and $\vs$ is a stochastic vector (that is, it is non-negative and sums to one.) Intuitively, a diffusion captures how a quantity of $s_i$ material on node $i$ \emph{flows} through the graph. The terms $\alpha_k$ provide a decaying weight that ensures that the diffusion eventually dissipates. In the context of this paper, we are interested in the diffusions of single nodes or neighborhood sets of a single vertex; and so in these cases $\vs = \ve_i$ or $\vs = \sum_{i \in S} \ve_i/|S|$ for a small set $S$. We call the origins of the diffusion the \emph{seeds}.

Given an estimate of a diffusion $\vf$ from a seed, we can produce a small conductance community from it using a sweep procedure. This involves computing $\mD^{-1} \vf$, sorting the nodes in descending order by their magnitude in this vector, and computing the conductance of each prefix of the sorted list. Due to the properties of conductance, there is an efficient means of computing all of these conductances. We then return the set of smallest conductance as the community around the seeds.

\textbf{The personalized PageRank diffusion.}
One of the most well-known instances of this framework is the personalized PageRank diffusion. Fix $\alpha \in (0,1)$. Then $\vp$ is defined:
\begin{equation}
\vp = (1-\alpha) \sum_{k=0}^\infty \alpha^k \mP^k \vs.
\end{equation}
The properties of this diffusion have been studied extensively. In particular, Andersen et al.~\cite{andersen2006-local} establish a local Cheeger inequality using a particular algorithm called ``push'' that locally distributes mass. The local Cheeger inequality informally states that, \emph{if} the seed is nearby a set with small conductance, \emph{then} the result of the sweep procedure is a set with a related conductance. Moreover, they show that their ``push'' algorithm estimates $\vf$ with an error $\eps$ in a degree-weighted norm by looking at $\frac{1}{(1-\alpha)\eps}$ edges.

\textbf{The heat kernel diffusion.}
Another instance of the same framework is the heat kernel diffusion~\cite{chung2007-pagerank-heat,chung2009local}. It simply replaces the weights $\alpha_k$ with $t^k/k!$:
\begin{equation}\label{defhk}
\vh = e^{-t}\left( \sum_{k=0}^{\infty} \frac{t^k}{k!} (\mP)^k \right) \vs = \expof{-t(\mI-\mP)}\vs.
\end{equation}
While it was known that estimating $\vh$ gave rise to a similar type of local Cheeger inequality~\cite{chung2009local}; until Chung and Simpson's Monte Carlo approach~\cite{chung2013solving}, no methods were known to estimate this quantity efficiently. Our new algorithm is a deterministic approach that is more suitable for comparing the properties of the diffusions. It terminates after exploring $\frac{2Ne^t}{\eps}$ edges (Theorem~\ref{thm:work}), where $N$ is a parameter that grows slowly\footnote{Since the original publication of this work we have proven an upperbound on $N$ in terms of $\eps$ and $t$~\cite{kloster2016graph}.} with $\eps$.

\textbf{Heat kernels compared to PageRank.}
These different sets of coefficients simply assign different levels of importance to walks of varying lengths: the heat kernel coefficients $\tfrac{t^k}{k!}$ decay much more quickly than $\alpha^k$, and so the heat kernel more heavily weights shorter walks in the overall sum (depicted in Figure \ref{fig:weights}). This property, in turn, will have important consequences when we study these methods in large graphs in Section~\ref{sec:experiments}.

\begin{figure}[ht]
\centering
\includegraphics[width=3in]{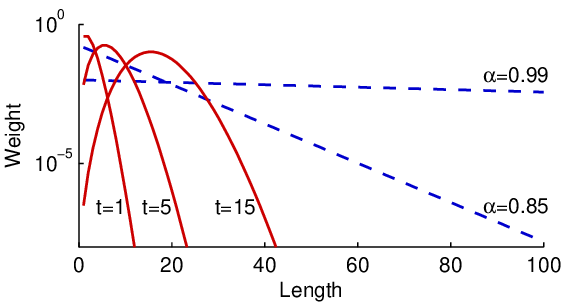}
\caption{Each curve represents the coefficients of $(\mA\mD\iv)^k$ in a sum of walks. The dotted blue lines give $\alpha^k$, and the red give $t^k/k!$, for the indicated values of $\alpha$ and $t$.}\label{fig:weights}
\end{figure}

\section{Algorithm}\label{sec:alg}
The overall idea of the local clustering algorithm is to approximate a heat kernel vector of the form 
\[
\vh = \expof{-t(\mI - \mP) } \vs
\]
so that we can perform a sweep over $\vh$. Here we describe a coordinate-relaxation method, which we call \hkre, for approximating $\vh$. This algorithm is rooted in our recent work on computing an accurate column of $\expof{\mP}$~\cite{Kloster-2013-nexpokit}; but is heavily tuned to the objective below. Thus, while the overall strategy is classical -- just as the PageRank push method is a classic relaxation method -- the simplifications and efficient implementation are entirely novel. In particular, the new objective in this paper enables us to get a constant runtime bound independent of any property of the graph, which differs markedly from both of our previous methods~\cite{Kloster-2013-nexpokit,Kloster-preprint-nexpokit}.

\textbf{Our objective.} Recall that the final step of finding a small conductance community involves dividing by the degree of each node. Thus, our goal is to compute $\vx \approx \vh$ satisfying the degree weighted bound:
\[
\| \mD\iv\expof{-t(\mI-\mP)}\vs - \mD\iv\vx \|_{\infty} < \eps.
\]
By using standard properties of the matrix exponential, we can factor $\expof{-t(\mI-\mP)} = e^{-t}\expof{t\mP)}$ and scale by $e^t$ so that the above problem is equivalent to computing $\vy$ satisfying $\|\mD\iv(\expof{t\mP}\vs-\vy)\|_{\infty} < e^t \eps$. The element-wise characterization is that $\vy$ must satisfy:
\begin{equation}\label{terminate}
|e^t\vh_i - \vy_i| < e^t \eps d_i
\end{equation}
 for all $i$. A similar weighted objective was used in the push algorithm for PageRank~\cite{andersen2006-local}. 
 
\textbf{Outline of algorithm.} 
To accomplish this, we first approximate $\expof{t\mP}$ with its degree $N$ Taylor polynomial, $T_N(t\mP)$, and then we compute $T_N(t\mP)\vs$. But we use a large, implicitly formed linear system to avoid explicitly evaluating the Taylor polynomial. Once we have the linear system, we state a relaxation method in the spirit of Gauss-Seidel and the PageRank push algorithm in order to compute an accurate approximation of $\vh$.

\subsection{Taylor Polynomial for exp\{X\}}
Determining the exponential of a matrix is a sensitive computation with a rich history~\cite{Moler-1978-19,Moler-2003-19}. For a general matrix $\mG$, an approximation via the Taylor polynomial,
\[
\expof{\mG} = \sum_{k=0}^{\infty} \tfrac{1}{k!}\mG^k \approx \sum_{k=0}^N \tfrac{1}{k!}\mG^k,
\]
can be inaccurate when $\| \mG \|$ is large and $\mG$ has mixed signs, as large powers $\mG^k$ can contain large, oppositely signed numbers that cancel properly only in exact arithmetic. However, we intend to compute $\expof{t\mP}\vs$, where $\mP$, $t$, and $\vs$ are nonnegative, so the calculation does not rely on any delicate cancellations. Furthermore, our approximation need not be highly precise. We therefore use the polynomial $\expof{t\mP}\vs \approx  T_N(t\mP) = \sum_{k=0}^N \tfrac{t^k}{k!} \mP^k$ for our approximation. For details on choosing $N$, see Section \ref{Ndetails}. For now, assume that we have chosen $N$ such that
\begin{equation} \label{eq:Nrequire}
\| \mD\iv \expof{t\mP}\vs - \mD\iv T_N(t\mP)\vs \|_{\infty} < \eps/2.
\end{equation}
This way, if we compute $\vy \approx T_N(t\mP)\vs$ satisfying
\[
\| \mD\iv T_N(t\mP)\vs - \mD\iv \vy \|_{\infty}
< \eps/2,
\]
then by the triangle inequality we will have
\begin{equation}\label{tayloraccuracy}
\| \mD\iv \expof{t\mP}\vs - \mD\iv\vy \|_{\infty} < \eps,
\end{equation}
 our objective.
 
\subsection{Error weights}
Using a degree $N$ Taylor polynomial, \hkre ultimately approximates $\vh$ by approximating each term in the sum of the polynomial times the vector $\vs$: 
\[
\vs + \tfrac{t}{1}\mP\vs + \cdots + \tfrac{t^N}{N!}\mP^N\vs.
\]
The total error of our computed solution is then a weighted sum of the errors at each individual term, $\tfrac{t^k}{k!}\mP^k\vs$.  We show in Lemma \ref{lem:psiweights} that these weights are given by the polynomials $\psi_k(t)$, which we define now. For a fixed degree $N$ Taylor polynomial of the exponential, $T_N = \sum_{k=0}^N \tfrac{t^k}{k!}$, we define
\begin{equation}\label{defpsi}
\psi_k := \sum_{m=0}^{N-k} \tfrac{k!}{(m+k)!}t^{m} \text{ for $k = 0, \cdots, N$}.
\end{equation}
These polynomials $\psi_k(t)$ are closely related to the $\phi$ functions central to exponential integrators in ODEs~ \cite{minchev2005norges}. Note that $\psi_0 = T_N$.

To guarantee the total error satisfies the criterion \eqref{terminate} then, it is enough to show that the error at each Taylor term satisfies an $\infty$-norm inequality analogous to \eqref{terminate}. This is discussed in more detail in Section \ref{sec:theory}.

\subsection{Deriving a linear system}\label{sec:linsys}
To define the basic step of the \hkre  algorithm and to show how the $\psi_k$ influence the total error, we rearrange the Taylor polynomial computation into a linear system.

Denote by $\vv_k$ the $k^{\text{th}}$ term of the vector sum $T_N(t\mP)\vs$:
\begin{align}
T_N(t\mP)\vs &= \vs + \tfrac{t}{1}\mP\vs + \cdots + \tfrac{t^N}{N!}\mP^N\vs\\
 &= \vv_0 + \vv_1 + \cdots + \vv_N.
\end{align}
Note that $\vv_{k+1} = \tfrac{t^{k+1}}{(k+1)!}\mP^{k+1} = \tfrac{t}{(k+1)}\mP\vv_k$. This identity implies that the terms $\vv_k$ exactly satisfy the linear system
\begin{equation}\label{largelinsys}
\left[\begin{array}{ccccc}
  \mI & & & & \\
 \tfrac{-t}{1}\mP & \mI & & & \\
        & \tfrac{-t}{2}\mP & \ddots & & \\
      & & \ddots& \mI&  \\
      &    &    & \tfrac{-t}{N}\mP & \mI \\
 \end{array}\right]
\bmat{ \vv_0 \\ \vv_1 \\ \vdots \\ \vdots \\ \vv_N } = \bmat{\vs \\ 0 \\ \vdots \\ \vdots \\ 0} .
\end{equation}
Let $\vv = [ \vv_0; \vv_1; \cdots; \vv_N]$. An approximate solution $\hat{\vv}$ to \eqref{largelinsys} would have block components $\hat{\vv}_k$ such that $\sum_{k=0}^N \hat{\vv}_k \approx T_N(t\mP)\vs$, our desired approximation to $e^t \vh$. In practice, we update only a single length $n$ solution vector, adding all updates to that vector, instead of maintaining the $N+1$ different block vectors $\hat{\vv}_k$ as part of $\hat{\vv}$; furthermore, the block matrix and right-hand side are never formed explicitly.

With this block system in place, we can describe the algorithm's steps.

\subsection{The hk-relax algorithm}\label{sec:relax}
Given  a random walk transition matrix $\mP$, scalar $t>0$, and seed vector $\vs$ as inputs, we solve the linear system from \eqref{largelinsys} as follows. Denote the initial solution vector by $\vy$ and the initial $nN \times 1$ residual by $\vr^{(0)} = \ve_1\kron\vs$. Denote by $r(i,j) $ the entry of $\vr$ corresponding to node $i$ in residual block $j$. The idea is to iteratively remove all entries from $\vr$ that satisfy
\begin{equation}\label{resqueue}
r(i,j) \geq \frac{e^t \eps d_i}{ 2N \psi_j(t)}.
\end{equation}
To organize this process, we begin by placing the nonzero entries of $\vr^{(0)}$ in a queue, $Q(\vr)$, and place updated entries of $\vr$ into $Q(\vr)$ only if they satisfy \eqref{resqueue}.

Then \hkre  proceeds as follows.
\begin{compactenum}
\item At each step, pop the top entry of $Q(\vr)$, call it $r(i,j)$, and subtract that entry in  $\vr$, making $\vr(i,j)=0$.
\item Add $r(i,j)$ to $\vy_i$.
\item Add $r(i,j)\tfrac{t}{j+1}\mP\ve_i$ to residual block $\vr_{j+1}$.
\item For each entry of $\vr_{j+1}$ that was updated, add that entry to the back of $Q(\vr)$ if it satisfies \eqref{resqueue}.
\end{compactenum}

Once all entries of $\vr$ that satisfy \eqref{resqueue} have been removed, the resulting solution vector $\vy$ will satisfy \eqref{terminate}, which we prove in Section \ref{sec:theory}, along with a bound on the work required to achieve this. We present working code for our method in Figure~\ref{fig:pseudocode} that shows how to optimize this computation using sparse data structures. These make it highly efficient in practice.

\begin{figure}[ht]
\begin{lstlisting}[language=python,commentstyle={\color{blue}\itshape}]
# G is graph as dictionary-of-sets, 
# seed is an array of seeds,
# t, eps, N, psis are precomputed
x = {} # Store x, r as dictionaries
r = {} # initialize residual
Q = collections.deque() # initialize queue
for s in seed: 
  r[(s,0)] = 1./len(seed)
  Q.append((s,0))
while len(Q) > 0:
  (v,j) = Q.popleft() # v has r[(v,j)] ...
  rvj = r[(v,j)]
  # perform the hk-relax step
  if v not in x: x[v] = 0.
  x[v] += rvj 
  r[(v,j)] = 0.
  mass = (t*rvj/(float(j)+1.))/len(G[v])
  for u in G[v]:   # for neighbors of v
    next = (u,j+1) # in the next block
    if j+1 == N: # last step, add to soln
      x[u] += rvj/len(G(v))
      continue
    if next not in r: r[next] = 0.
    thresh = math.exp(t)*eps*len(G[u])
    thresh = thresh/(N*psis[j+1])/2.
    if r[next] < thresh and \
       r[next] + mass >= thresh:
       Q.append(next) # add u to queue
    r[next] = r[next] + mass
\end{lstlisting}
\caption{Pseudo-code for our algorithm as 
working python code. The graph is stored
as a dictionary of sets so that the 
{\normalfont\texttt{G[v]}} statement returns the set of
neighbors associated with vertex $v$. The 
solution is the vector $x$ indexed by 
vertices and the residual vector is 
indexed by tuples $(v,j)$ that are
pairs of vertices and steps $j$.
A fully working demo may be downloaded from github
\small\normalfont \texttt{https://gist.github.com/dgleich/7d904a10dfd9ddfaf49a}.
}
\label{fig:pseudocode}
\end{figure}

\subsection{Choosing N}\label{Ndetails}
The last detail of the algorithm we need to discuss is how to pick $N$. In \eqref{eq:Nrequire} 
we want to guarantee the accuracy of
$\mD\iv\expof{t\mP}\vs - \mD\iv T_N(t\mP)\vs $. By using
$\mD\iv\expof{t\mP} = \expof{t\mP^T}\mD\iv$ and $\mD\iv T_N(t\mP) = T_N(t\mP^T)\mD\iv$,
we can get a new upperbound on $\| \mD\iv\expof{t\mP}\vs - \mD\iv T_N(t\mP)\vs \|_{\infty}$ by noting
\[ \begin{aligned}
& \| \expof{t\mP^T}\mD\iv\vs - T_N(t\mP^T)\mD\iv\vs \|_\infty  \\
& \qquad \le \|\expof{t\mP^T} - T_N(t\mP^T)\|_{\infty} \|\mD\iv \vs\|_{\infty}.
\end{aligned} \]
Since $\vs$ is stochastic, we have $\| \mD\iv \vs\|_{\infty} \leq \|\mD\iv \vs \|_1 \leq 1$. From~\cite{liou1966novel} we know that the norm $\|\expof{t\mP^T} - T_N(t\mP^T)\|_{\infty}$ is bounded by
\begin{equation}
\frac{\|t\mP^T\|_{\infty}^{N+1}}{(N+1)!}\frac{(N+2)}{(N+2-t)} \leq \frac{t^{N+1}}{(N+1)!}\frac{(N+2)}{(N+2-t)}.
\end{equation} 
So to guarantee \eqref{eq:Nrequire}, it is enough to choose $N$ that implies $\frac{t^{N+1}}{(N+1)!}\frac{(N+2)}{(N+2-t)} < \eps/2$. Such an $N$ can be determined efficiently simply by iteratively computing terms of the Taylor polynomial for $e^t$ until the error is less than the desired error for \hkre. In practice, this required a choice of $N$ no greater than $2t\log(\tfrac{1}{\eps})$, which we think can be made rigorous.

\subsection{Outline of convergence result}

The proof proceeds as follows. First, we relate the error vector of the Taylor approximation $E_1 = \tnps - \vx$, to the error vector from solving the linear system described in Section \ref{sec:relax}, $E_2 = \tnps - \vy$. Second, we express the error vector $E_2$ in terms of the residual blocks of the linear system \eqref{largelinsys}; this will involve writing $E_2$ as a sum of residual blocks $\vr_k$ with weights $\psi_k(t\mP)$. Third, we use the previous results to upperbound
$\|\mD\iv\tnps - \mD\iv\vx\|_{\infty}$ with $\sum_{k=0}^N \psi_k(t) \| \mD\iv \vr_k\|_{\infty}$, and use this to show that $\|\mD\iv\tnps - \mD\iv\vx\|_{\infty} < \eps/2$ is guaranteed
by the stopping criterion of \hkre, \eqref{terminate}. Finally, we prove that performing steps of \hkre until the stopping criterion is attained requires work bounded by $\frac{2N \psi_1(t)}{\eps} \le 2N e^{t} / \eps$.

\section{Related Work}
\label{sec:related}
We wish to highlight a few ideas that have recently emerged in the literature to clarify how our method differs. We discuss these in terms of community detection, the matrix exponential, fast diffusion methods, and relaxation methods.

\textbf{Community detection and conductance.}
Conductance often appears in community detection and is known to be one of the most important measures of a community~\cite{Schaeffer-2007-clustering}. The personalized PageRank method is one of the most scalable methods to find sets of small conductance, although recent work opened up a new possibility with localized max-flow algorithms~\cite{Orecchia-2014-flow}. For the PageRank algorithm we use as a point of comparison, Zhu et al.~\cite{Zhu-2013-local} recently provided an improved bound on the performance of this algorithm when finding sets with high internal conductance. The internal conductance of a set is the minimum conductance of the subgraph induced by the set and we would expect that real-world communities have large internal conductance. Due to the similarity between our algorithm and the personalized PageRank diffusion, we believe that a similar result likely holds here as well. 

\textbf{The matrix exponential in network analysis.}
Recently, the matrix exponential has frequently appeared as a tool in the network analysis literature. 
It has been used to estimate node centrality~\cite{Estrada-2000-index,farahat2006-exphits,Estrada-2010-matrix-functions}, for link-prediction~\cite{Kunegis-2009-learning-spectral}, in graph kernels~\cite{Kondor-2002-diffusion}, and -- as already mentioned -- clustering and community detection~\cite{chung2007-pagerank-heat}. Many of these studies involve fast ways to approximate the \emph{entire} matrix exponential, instead of a single column as we study here. For instance, Sui et al.~\cite{Sui-2013-low-rank} describe a low-parameter decomposition of a network that is useful both for estimating Katz scores~\cite{katz1953-status} and the matrix exponential. Orecchia and Mahoney~\cite{Orecchia-2011-implicit-regularization} show that the heat kernel diffusion implicitly approximates a diffusion operator using a particular type of generalized entropy, which provides a principled rationale for its use.

\textbf{Fast methods for diffusions.}
Perhaps the most related work is a recent Monte Carlo algorithm by Chung and Simpson~\cite{chung2013solving} to estimate the heat kernel diffusion via a random walk sampling scheme. This approach involves directly simulating a random walk with transition probabilities that mirror the Taylor series expansion of the exponential. In comparison, our approach is entirely deterministic and thus more useful to compare between diffusions, as it eliminates the algorithmic variance endemic to Monte Carlo simulations. A similar idea is used by Borgs et al.~\cite{Borgs-2013-sublinear} to achieve a randomized sublinear time algorithm to estimate the largest PageRank entries, and in fact, Monto Carlo methods frequently feature in PageRank computations due to the relationship between the diffusion and a random walk~\cite{Avrachenkov-2007-Monte-Carlo,Borgs-2013-sublinear,Bahmani-2011-PageRank-MR,Bahmani-2010-PageRank}. Most other deterministic approaches for the matrix exponential involve at least one matrix-vector product~\cite{Orecchia-2012-exponential,Al-Mohy-2011-exponential}.

\textbf{Relaxation methods.} The algorithm we use is a coordinate relaxation method similar to Gauss-Seidel and Gauss-Southwell. If we applied it to a symmetric positive definite matrix, then it would be a coordinate descent method~\cite{Luo1992-coordinate-descent}. It has been proposed for PageRank in a few difference cases~\cite{jeh2003-personalized,mcsherry2005-uniform,andersen2006-local}. The same type of relaxation method has also been used to estimate the Katz diffusion~\cite{Bonchi-2012-fast-katz}. We recently used it to estimate a column of the matrix exponential $\expof{\mP} \ve_i$ in a strict, 1-norm error and were able to prove a sublinear convergence bound by assuming a very slowly growing maximum degree~\cite{Kloster-2013-nexpokit} or a power-law degree distribution~\cite{Kloster-preprint-nexpokit}. This paper, in comparision, treats the scaled exponential $\expof{-t(\eye - \mP)} \ve_i$ in a degree-weighted norm; it also shows a constant runtime independent of any network property.

\section{Experimental Results}
\label{sec:experiments}

Here we compare \hkre with a PageRank-based local clustering algorithm, pprpush~\cite{andersen2006-local}. Both algorithms accept as inputs a symmetric graph $\mA$ and seed set $\vs$. The parameters required are $t$ and $\eps$, for \hkre, and $\alpha$ and $\eps$ for pprpush. Both algorithms compute their respective diffusion ranks starting from the seed set, then perform a sweep-cut on the resulting ranks. The difference in the algorithms lies solely in the diffusion used and the particular parameters. We conducted the timing experiments on a Dual CPU system with the Intel Xeon E5-2670 processor (2.6 GHz, 8 cores) with 16 cores total and 256 GB of RAM. None of the experiments needed anywhere near all the memory, nor any of the parallelism. Our implementation uses Matlab's sparse matrix data structure through a C++ mex interface. It uses C++ unordered maps to store sparse vectors and is equivalent to the code in Figure~\ref{fig:pseudocode}.

\subsection{Synthetic results}
In this section, we study the behavior of the PageRank and heat kernel diffusions on the symbolic image graph of a chaotic function $f$~\cite{Shepelyansky2010-Ulam}. The graphs that result are loosely reminiscent of a social network because they have pockets of structure, like communities, and also chaotic behaviour that results in a small-world like property.

The symbolic image of a function $f$ is a graph where each node represents a region of space, and edges represent the action of the function in that region of space. We consider two-dimensional functions $f(x,y)$ in the unit square $[0,1]^2$ so that we can associate each node with a pixel of an image and illustrate the vectors as images. In Figure~\ref{fig:chirikov} (left), we illustrate how the graph construction works. In the remaining examples, we let $f$ be a map that results from a chaotic, nonlinear dynamical system~\cite{Shepelyansky2010-Ulam} (we use the T10 construction with $k=0.22,\eta=0.99$ and sample 1000 points from each region of space, then symmetrize the result and discard weights). We discretize space in a $512 \times 512$ grid, which results in a $262,144$ node graph with $2M$ edges. In Figure~\ref{fig:chirikov} (right), we also show the PageRank vector with uniform teleportation as an image to ``illustrate the structure'' in the function $f$.

Next, in Figure~\ref{fig:chirikov-ex}, we compare the vectors and sets identified by both diffusions starting from a single seed node. We chose the parameters $\alpha = 0.85$ and $t=3$ so the two methods perform the same amount of work. These results are what would be expected from Figure~\ref{fig:weights}, and what many of the remaining experiments show: PageRank diffuses to a larger region of the graph whereas the heat-kernel remains more focused in a sub-region. PageRank, then, finds a large community with about 5,000 nodes whereas the heat kernel finds a small community with around $452$ nodes with slightly worse conductance. This experiment suggests that, if these results also hold in real-world networks, then because real-world communities are often small~\cite{Leskovec-2009-community-structure}, the heat kernel diffusion should produce \emph{more accurate} communities in real-world networks.

\begin{figure}[ht]
\centering
\hspace{1em}
\includegraphics[width=0.4\linewidth]{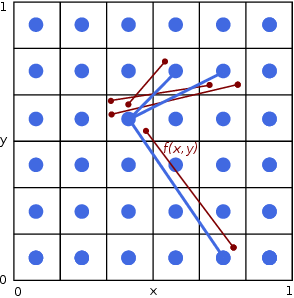}\hfill  \includegraphics[width=0.4\linewidth]{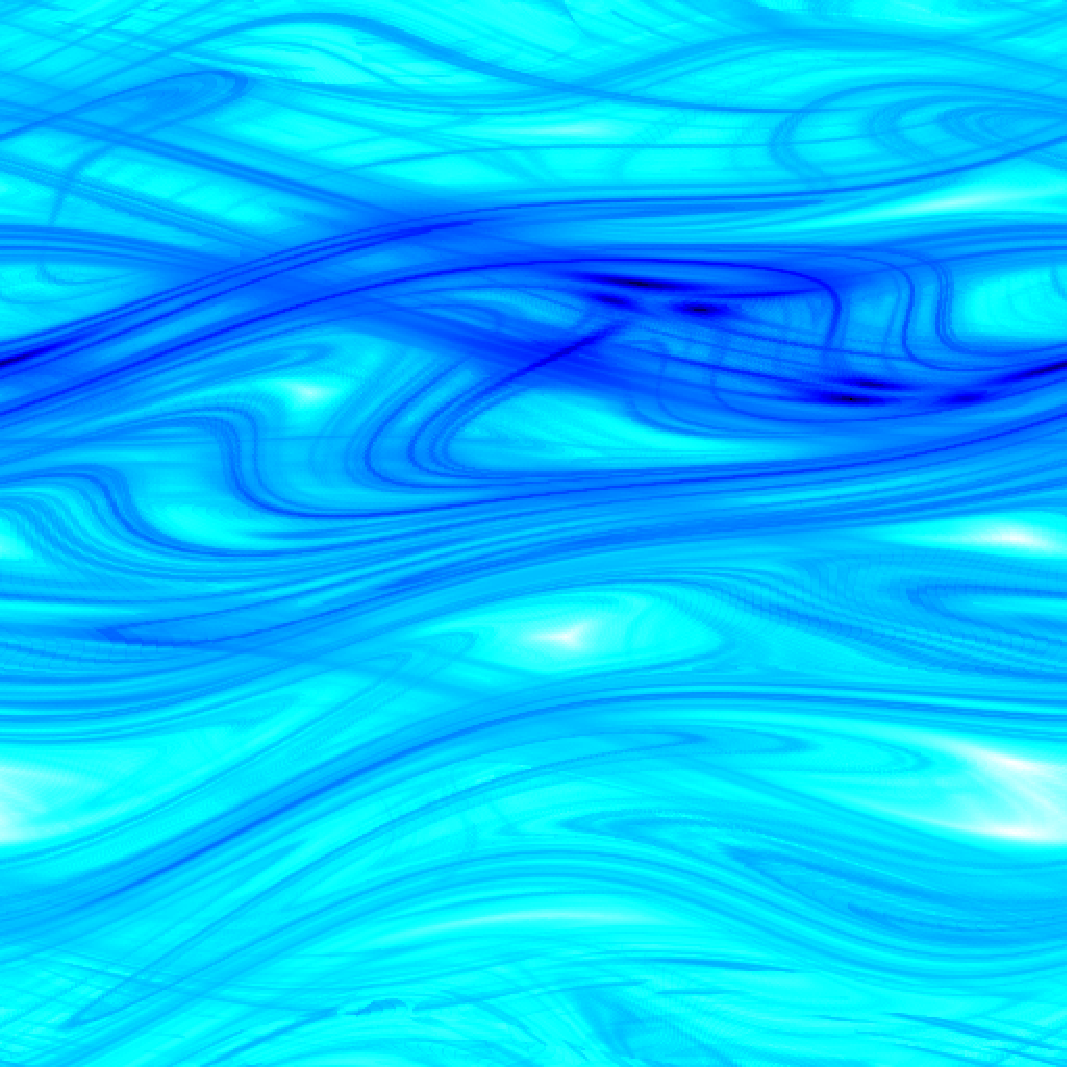}
\hspace*{1em}
\caption{(Left) An illustration of the symbolic image of a function $f$ as a graph. Each
large, blue node represents a region of space. Thick, blue edges represent how the
thin, red values of the function behave in that region. (Right) The global PageRank vector is then
an image that illustrates features of the chaotic map $f$ and shows that it has pockets
of structure.}
\label{fig:chirikov}

\end{figure}

\begin{figure}[ht] \centering
\fontsize{7}{8}\selectfont%
\begin{tabular}{@{}c@{}c@{}}
\includegraphics[width=0.5\linewidth]{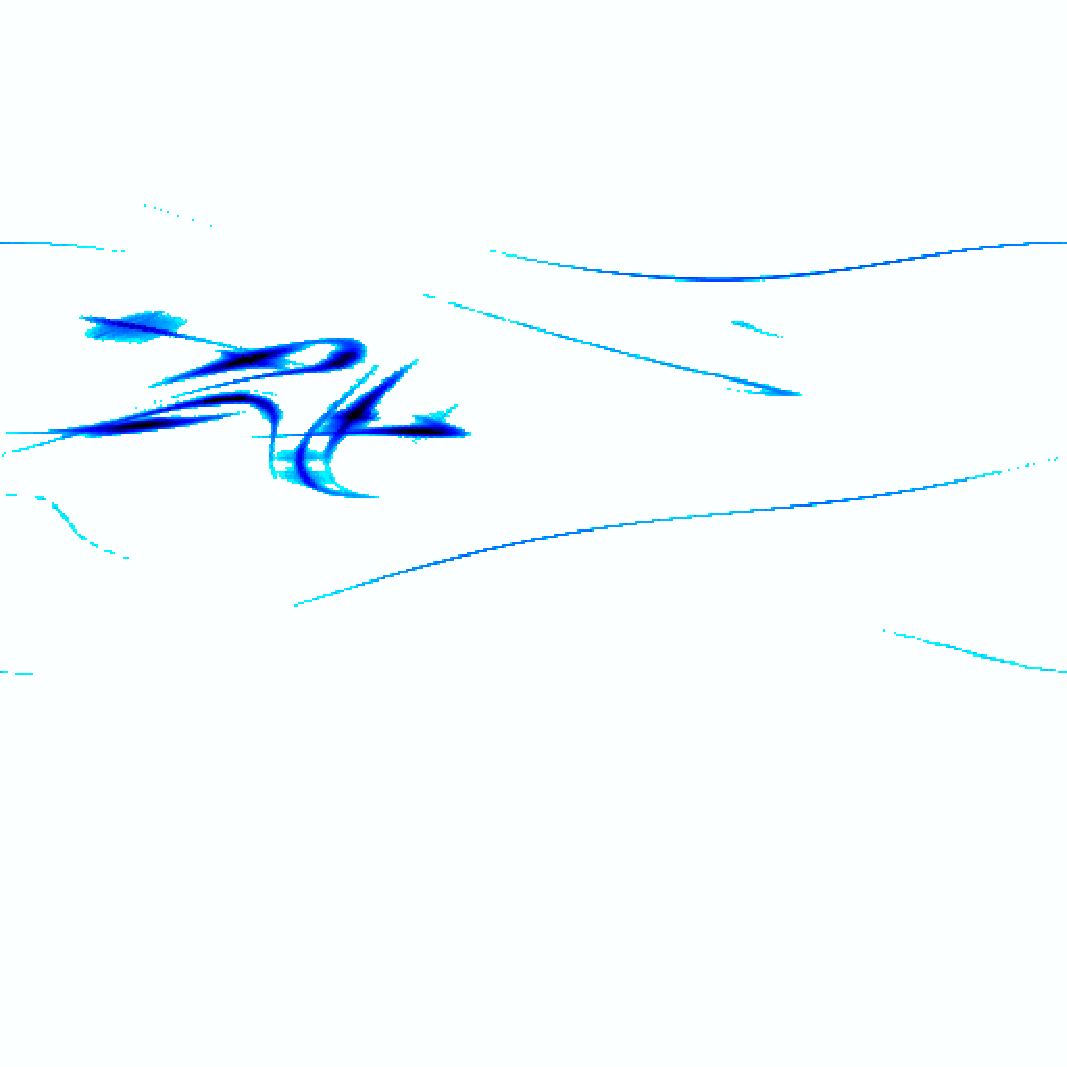} &
\includegraphics[width=0.5\linewidth]{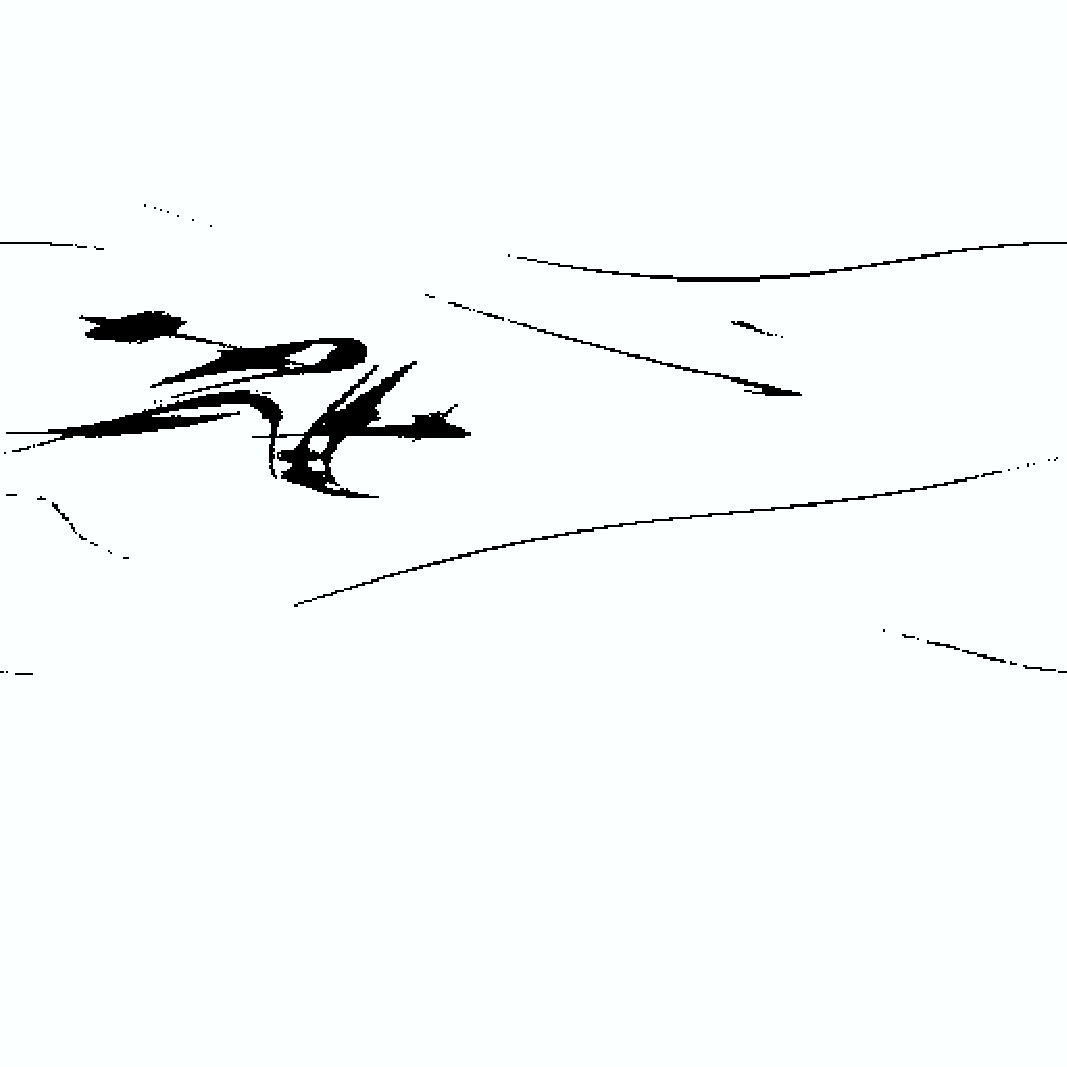}\\
 (a) ppr vector $\alpha = 0.85$, $\eps = 10^{-5}$ & (b) ppr set, $\phi = 0.31$, size = 5510 \\
\includegraphics[width=0.5\linewidth]{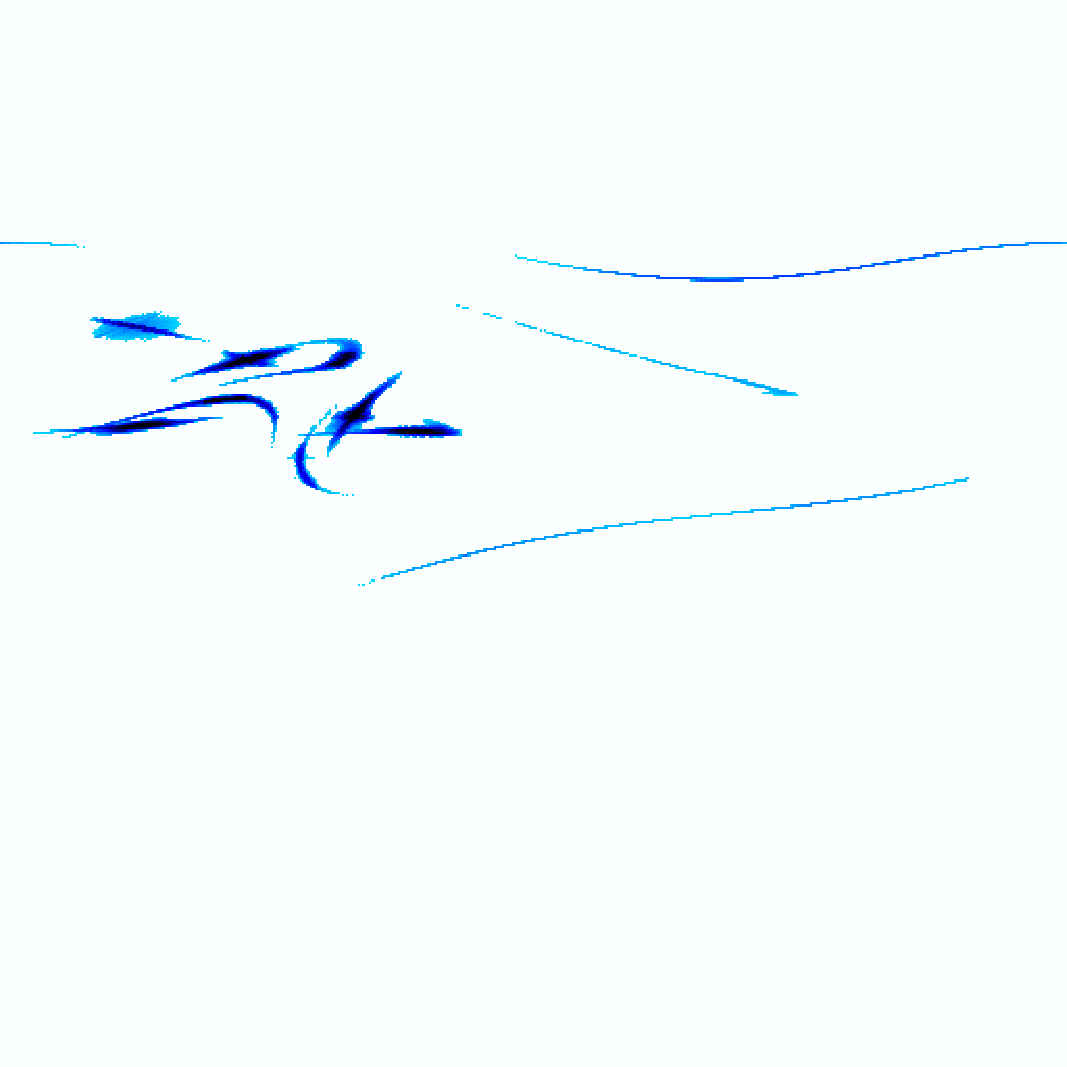} &
\includegraphics[width=0.5\linewidth]{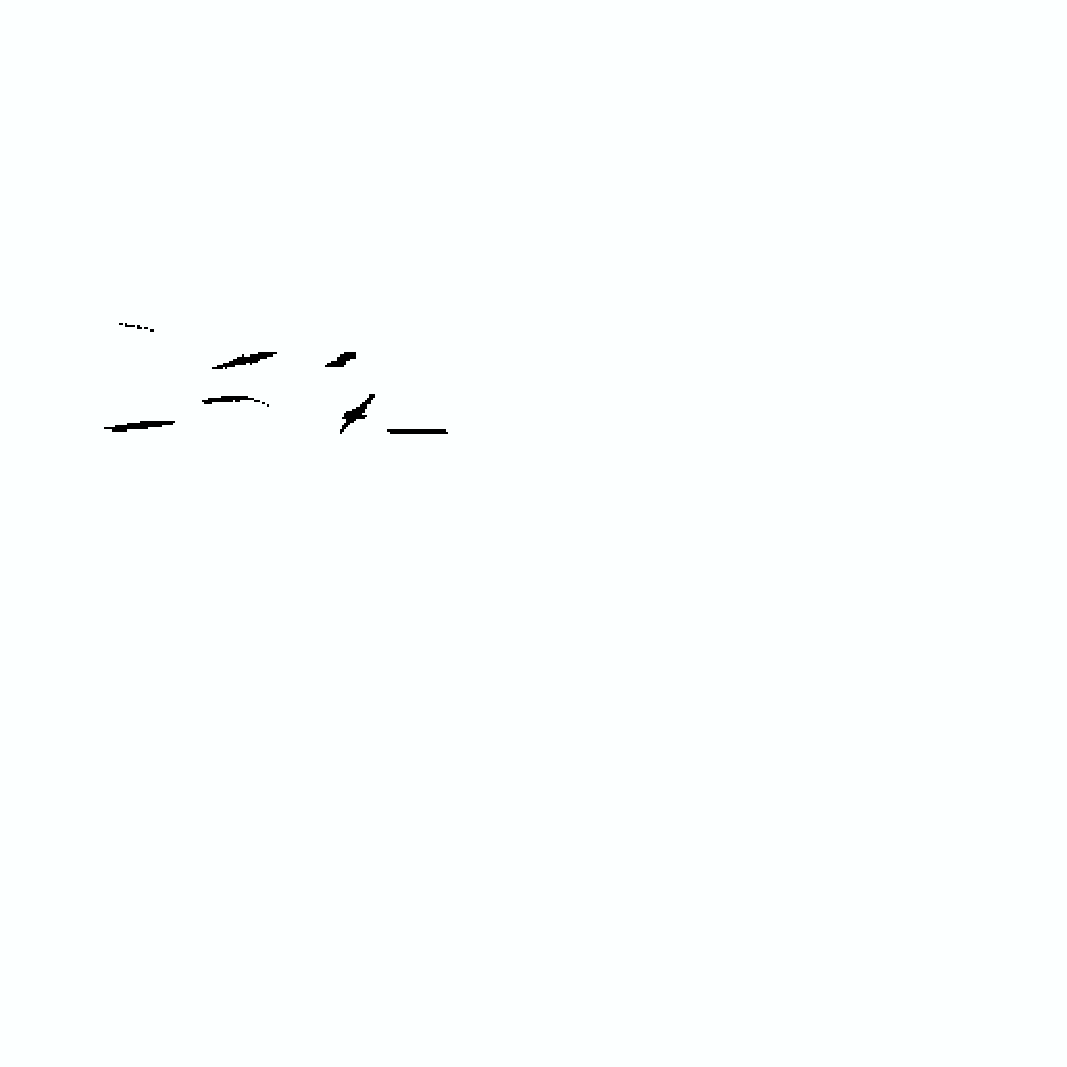}\\
 (c) hk vector $t = 3$, $\eps = 10^{-5}$ & (d) hk set, $\phi = 0.34$, size = 455\\
\end{tabular}
\caption{When we compare the heat-kernel and PageRank diffusions on the symbolic image
of the Chirikov map (see Figure~\ref{fig:chirikov}), pprgrow finds a larger set
with slightly better conductance, whereas hkgrow finds a tighter set with about the
same conductance. In real-world networks, these smaller sets are more like real-world
communities.}
\label{fig:chirikov-ex}
\end{figure}

\subsection{Runtime and conductance}

We next compare the runtime and conductance of the algorithms on a suite of social networks. For pprpush, we fix $\alpha = 0.99$, then compute PageRank for multiple values of $\eps = 10^{-2}, 10^{-3}, 10^{-4}, 10^{-5}$, and output the set of best conductance obtained. (This matches the way this method is commonly used in past work.) For \hkre, we compute the heat kernel rank for four different parameter sets, and output the set of best conductance among them: $(t,\eps) = (10,10^{-4}); (20,10^{-3}); (40,5\cdot 10^{-3}); (80,10^{-2})$. We also include in \hkre an early termination criterion, in the case that the sum of the degrees of the nodes which are relaxed, $\sum d_{i_l}$, exceeds $n^{1.5}$.
However, even the smaller input graphs (on which the condition is more likely to be met because of the smaller value of $n^{1.5}$) do not appear to have reached this threshold. Furthermore, the main theorem of this paper implies that the quantity $\sum d_{i_l}$ cannot exceed $\frac{2N \psi_1(t)}{\eps}$. The datasets we use are summarized in Table~\ref{tab:datasets}; all datasets are modified to be undirected and a single connected component. These datasets were originally presented in the following papers~\cite{alberich2002marvel,Moguna-2004-models,Boldi-2011-layered,Kwak2010-Twitter,Leskovec-2010-signed,Leskovec-2007-densification,Leskovec-2009-community-structure,mislove2007measurement,Newman-2006-network,Newman-2001-scientific-collaboration,Caida-2005-network,Yang:2012:DEN:2350190.2350193}.

\begin{table}
\centering
\caption{Datasets} \label{tab:datasets}
\begin{tabularx}{\linewidth}{lrr} \toprule
Graph& $|V|$ &$|E|$\\ \midrule
\texttt{pgp-cc} &     10,680  &  24,316 \\
\texttt{ca-AstroPh-cc}&   17,903  &    196,972 \\
\texttt{marvel-comics-cc}&        19,365   &   96,616\\
\texttt{as-22july06}&           22,963   &    48,436\\
\texttt{rand-ff-25000-0.4}&       25,000   &   56,071\\
\texttt{cond-mat-2003-cc}&         27,519   &   116,181\\
\texttt{email-Enron-cc}&         33,696   &   180,811 \\
\texttt{cond-mat-2005-fix-cc}&  36,458   &  171,735 \\
\texttt{soc-sign-epinions-cc}&  119,130   &  704,267\\
\texttt{itdk0304-cc}&       190,914   &  607,610\\
\texttt{dblp-cc}&        226,413   &  716,460\\
\texttt{flickr-bidir-cc}&        513,969   &  3,190,452 \\
\texttt{ljournal-2008} &  5,363,260 &  50,030,085 \\

\texttt{twitter-2010} &  41,652,230   &  1,202,513,046 \\
\texttt{friendster} &  65,608,366   & 1,806,067,135 \\
\hline
\texttt{com-amazon} &  334,863   & 925,872 \\
\texttt{com-dblp} & 317,080 & 1,049,866 \\
\texttt{com-youtube} &  1,134,890   & 2,987,624 \\
\texttt{com-lj} &  3,997,962 &  34,681,189  \\
\texttt{com-orkut} &  3,072,441   & 117,185,083 \\
 \bottomrule
\end{tabularx}
\end{table}

To compare the runtimes of the two algorithms, we display in Figure \ref{fig:timecond} for each graph the $25\%, 50\%,$ and $75\%$ percentiles of the runtimes from 200 trials performed. For a given graph, each trial consisted of choosing a node of that graph uniformly at random to be a seed, then calling both the PageRank and the heat kernel algorithms. On the larger datasets, which have a much broader spectrum of node degrees and therefore greater variance, we instead performed 1,000 trials. Additionally, we display the $25\%, 50\%,$ and $75\%$ percentiles of the conductances achieved during the exact same set of trials. The trendlines of these figures omit some of the trials in order to better show the trends, but all of the median results are plotted (as open circles). The figures show that in small graphs, \hkre is faster than pprpush, but gets larger (worse) conductance. The picture reverses for large graphs where \hkre is slower but finds smaller (better) conductance sets.

\begin{figure}[ht]
\centering
\includegraphics[width=3in]{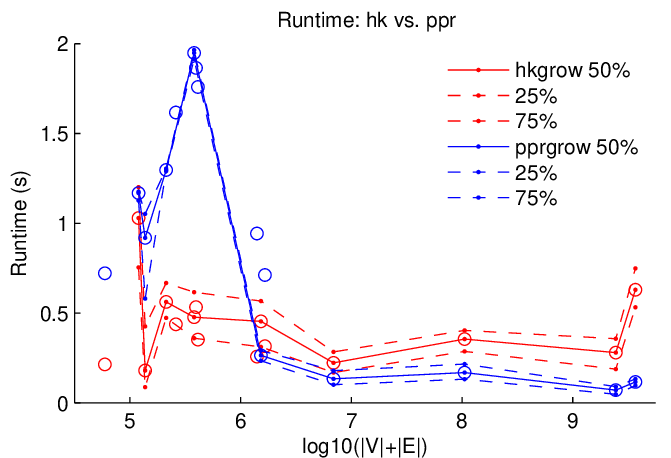}
\includegraphics[width=3in]{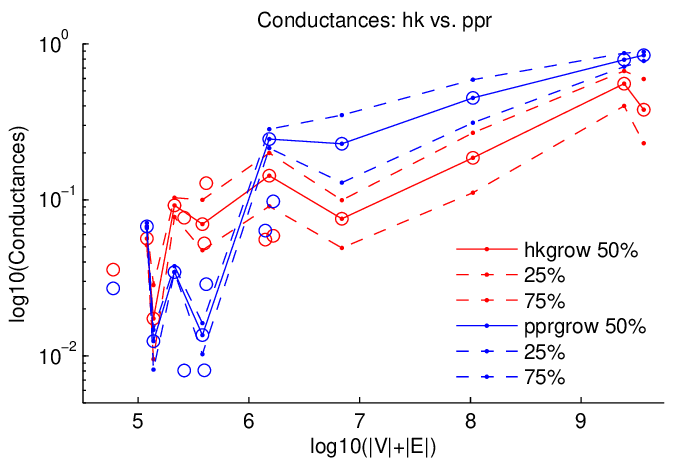}
\caption{(Top figure) Runtimes of the hk-relax vs.~ppr-push, shown with percentile trendlines from a select set of experiments. (Bottom) Conductances of hk-relax vs.~ppr-push, shown in the same way.}
\label{fig:timecond}
\end{figure}

\textbf{Cluster size and conductance.}
We highlight the individual results of the previous experiment on the symmetrized twitter network. Here, we find that \hkre finds sets of better conductance than pprpush at all sizes of communities in the network.  See Figure~\ref{fig:condsize}.
\begin{figure}[ht]
\centering
\includegraphics[width=3in]{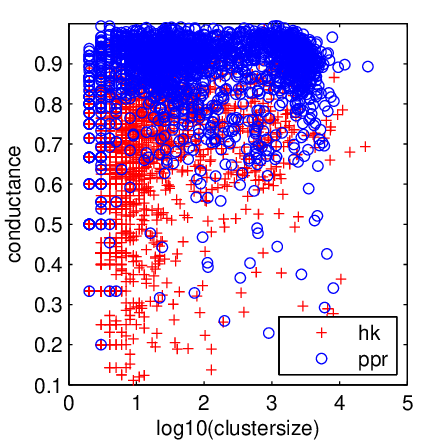}
\includegraphics[width=3in]{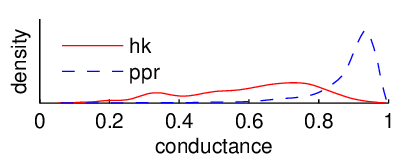}
\caption{The top figure shows a scatter plot of conductance vs.~community size in the twitter graph for the two community detection methods; the bottom figure shows a kernel density estimate of the conductances achieved by each method, which shows that hk-relax is more likely to return a set of lower conductance.}
\label{fig:condsize}
\end{figure}

\subsection{Clusters produced vs.~ground-truth}

\begin{table}[t]
\caption{The result of evaluating the heat kernel (hk) vs.~PageRank (pr) on finding real-world communities. The heat kernel finds smaller, more accurate, sets with slightly worse conductance.}
\label{tab:ground-truth}
 \begin{tabularx}{1\linewidth}{lXXXXXX} \toprule
data & \multicolumn{2}{l}{$F_1$-measure} & \multicolumn{2}{l}{conductance} & \multicolumn{2}{l}{set size} \\
 & \texttt{hk} & \texttt{pr} &
 \texttt{hk} & \texttt{pr} &
 \texttt{hk} & \texttt{pr} \\
\midrule
\texttt{amazon} & 0.608 & 0.415 & 0.124  &0.050 & 145& 5,073\\
\texttt{dblp} & 0.364& 0.273& 0.238 & 0.144 & 156& 4,529\\
\texttt{youtube} & 0.128 & 0.078 & 0.477 & 0.361 & 137 & 5,833 \\
\texttt{lj} & 0.138 & 0.104 & 0.564 & 0.524 & 156 & 299 \\
\texttt{orkut} & 0.183 & 0.116 & 0.824 & 0.736 & 95 & 476 \\
\texttt{friendster} & 0.125 & 0.112 & 0.866 & 0.860 & 96 & 218 \\
%
%
\bottomrule
\end{tabularx}
\end{table}
We conclude with an evaluation of identifying ground-truth communities in the \texttt{com-dblp}, \texttt{com-lj}, \texttt{com-amazon}, \texttt{com-orkut},  \texttt{com-youtube}, and \texttt{com-friendster} datasets~\cite{mislove2007measurement,Yang:2012:DEN:2350190.2350193}. In this experiment, for each dataset we first located 100 known communities in the dataset of size greater than 10\footnote{We thank AmirMahdi Ahmadinejad for pointing out a small mistake in our original implementation of this experiment -- though we reported in Table~\ref{tab:ground-truth} in our original publication~\cite{kloster2014heat} that we carried out this experiment on 100 commmunities of size greater than 10, Ahmadinejad noted that in practice we had used some communities of size less than 10. The results shown here in Table~\ref{tab:ground-truth} are from the corrected experiment, and in fact are even more favorable for our \hkre algorithm than our original reported results.}
. Given one such community, using every single node as an individual seed, we looked at the sets returned by \hkre with $t=5, \eps=10^{-4}$ and pprpush using the standard procedure. We picked the set from the seed that had the highest $F_1$ measure. (Recall that the $F_1$ measure is a harmonic mean of precision and recall.)  We report the mean of the $F_1$ measure, conductance, and set size, where the average is taken over all 100 trials in Table~\ref{tab:ground-truth}.
These results show that \hkre produces only slightly inferior conductance scores, but using much smaller sets with substantially better $F_1$ measures. This suggests that \hkre better captures the properties of real-world communities than the PageRank diffusion in the sense that the tighter sets produced by the heat kernel are better focused around real-world communities than are the larger sets produced by the PageRank diffusion.



\section{Conclusions}

These results suggest that the \hkre algorithm is a viable companion to the celebrated PageRank push algorithm and may even be a worthy competitor for tasks that require accurate communities of large graphs. Furthermore, we suspect that the \hkre method will be useful for the myriad other uses of PageRank-style diffusions such as link-prediction~\cite{Kunegis-2009-learning-spectral} or even logic programming~\cite{Wang:2013:PPP:2505515.2505573}.

In the future, we plan to explore this method on directed networks as well as better methods for selecting the parameters of the diffusion. It is also possible that our new ideas may translate to faster methods for non-conservative diffusions such as the Katz~\cite{katz1953-status} and modularity methods~\cite{Newman-2006-modularity}, and we plan to explore these diffusions as well.


\section*{Acknowledgements}
This work was supported by NSF CAREER Award CCF-1149756.

\fontsize{8}{9}\selectfont
\bibliographystyle{abbrv}
\bibliography{all-bibliography,newbib}  

\appendix
\section{Convergence theory}\label{sec:theory}\label{sec:anls}

Here we state our main result bounding the work required by \hkre  to approximate the heat kernel with accuracy as described in \eqref{terminate}.
\begin{theorem}\label{thm:work}
Let $\mP$, $t$, $\psi_k(t)$, and $\vr$ be as in Section \ref{sec:alg}. If steps of \hkre  are performed until all entries of the residual satisfy $r(i,j) < \frac{e^t \eps d_i}{ 2 N \psi_j(t)}$, then \hkre  produces an approximation $\vx$ of $\vh = \expof{-t(\mI-\mP)}\vs$ satisfying 
\[
\| \mD\iv\expof{-t(\mI-\mP)}\vs -\mD\iv\vx \|_{\infty} < \eps,
\]
and the amount of work required is bounded by
\[
\text{work}(\eps) \leq \frac{2N\psi_1(t)}{\eps} \leq \frac{2N (e^t-1)}{\eps t}.
\]
\end{theorem}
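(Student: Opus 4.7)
The plan is to follow the four-step outline given in Section~\ref{sec:alg} immediately before the theorem. First, I would split the target error $\|\mD\iv(\expof{-t(\mI-\mP)}\vs - \vx)\|_\infty$ via the triangle inequality into a Taylor-truncation part and a linear-system-residual part. The choice of $N$ from Section~\ref{Ndetails} handles the truncation, leaving only $\|\mD\iv(T_N(t\mP)\vs - \vy)\|_\infty$ to control, where $\vy := \sum_{k=0}^N \hat{\vv}_k$ is the implicit solution assembled by \hkre from the block system \eqref{largelinsys}.

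The heart of the proof is an identity expressing this residual-solve error as a $\psi$-weighted sum of residual blocks. Starting from the block invariant that the implicit $\hat{\vv}$ and the residual $\vr$ satisfy the system exactly except for $\vr$, the recurrences $\hat{\vv}_0 = \vs - \vr_0$ and $\hat{\vv}_k = \frac{t}{k}\mP\hat{\vv}_{k-1} - \vr_k$ unroll by induction on $k$ to
\[
\hat{\vv}_k = \frac{t^k}{k!}\mP^k\vs - \sum_{j=0}^k \frac{j!\, t^{k-j}}{k!}\mP^{k-j}\vr_j.
\]
Summing over $k$ and swapping the order of summation (with $m = k-j$) collapses the inner coefficients into exactly the polynomials $\psi_j$ of \eqref{defpsi} applied at $t\mP$, giving
\[
\vy = T_N(t\mP)\vs - \sum_{j=0}^N \psi_j(t\mP)\vr_j.
\]
Because $\mA$ is symmetric, $\mD\iv\mP = \mP^T \mD\iv$, so $\mD\iv\psi_j(t\mP) = \psi_j(t\mP^T)\mD\iv$, and since $\mP^T$ is row-stochastic we have $\|\psi_j(t\mP^T)\|_\infty \leq \psi_j(t)$. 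The triangle inequality then yields $\|\mD\iv(T_N(t\mP)\vs-\vy)\|_\infty \leq \sum_j \psi_j(t)\|\mD\iv\vr_j\|_\infty$, and the stopping rule $r(i,j) < e^t\eps d_i/(2N\psi_j(t))$ cancels each $\psi_j(t)$ factor, so the full sum is at most $(N+1)e^t\eps/(2N)$. Combined with the Taylor bound this closes the accuracy claim.

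For the work count, the stopping criterion gives $d_i \leq 2N\psi_j(t)r(i,j)/(e^t\eps)$ at each pop, so $\text{work} \leq \frac{2N}{e^t\eps}\sum_j\psi_j(t)M_j$, where $M_j$ is the total $\ell_1$ mass popped from block $j$. Since $\mP$ is column-stochastic, each pop at $(i,j)$ pushes $\frac{t}{j+1}r(i,j)$ of mass into block $j+1$, and the initial residual has unit mass in block $0$ only; mass conservation therefore gives $M_0 \leq 1$ and $M_j \leq \frac{t}{j}M_{j-1}$, so $M_j \leq t^j/j!$. The useful identity $\psi_j(t)\cdot t^j/j! = \sum_{k=j}^N t^k/k!$ then reorganizes the $\psi$-weighted sum so that it can be bounded by $e^t\psi_1(t)$, delivering the claimed bound $2N\psi_1(t)/\eps$ and then the looser $2N(e^t-1)/(\eps t)$ via $\psi_1(t) \leq (e^t-1)/t$.

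I expect the main obstacle to be the residual-weights identity in the second paragraph: the coefficients must arise as exactly $\psi_j(t\mP)$, which requires careful bookkeeping of the $j!/k!$ factors through the induction and change of summation index. Once that identity is in hand, the remaining pieces --- the $\mP \leftrightarrow \mP^T$ commutation under $\mD\iv$, the direct threshold substitution, and the mass-conservation potential argument for work --- are routine analogues of the classical amortized ``push'' analysis adapted to the block--Taylor structure.
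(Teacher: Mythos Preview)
Your accuracy argument is essentially the paper's: the $\psi$-weighted residual identity, the $\mD\iv\mP = \mP^T\mD\iv$ commutation, and the direct threshold substitution all match. The only cosmetic difference is that you obtain $T_N(t\mP)\vs - \vy = \sum_j\psi_j(t\mP)\vr_j$ by unrolling the block recurrence, whereas the paper (Lemma~\ref{lem:psi}) gets it by explicitly inverting $\mI - \mS\otimes(t\mP)$; the two derivations are equivalent.

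The work argument, however, diverges from the paper and contains a gap. Your per-block bound $M_j \le t^j/j!$ is correct, and the identity $\psi_j(t)\,t^j/j! = \sum_{k=j}^N t^k/k!$ is correct, but the resulting sum
\[
\sum_{j=0}^N \psi_j(t)\,\frac{t^j}{j!}
\;=\; \sum_{k=0}^N (k+1)\frac{t^k}{k!}
\;=\; T_N(t) + t\,T_{N-1}(t)
\;\approx\; (1+t)e^t
\]
is \emph{not} bounded by $e^t\psi_1(t) \approx e^t(e^t-1)/t$ for small $t$: at $t=1$ the left side is about $2e$ while the right is $e(e-1) < 2e$. So the step ``can be bounded by $e^t\psi_1(t)$'' fails, and your route does not deliver the stated $2N\psi_1(t)/\eps$.

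The paper's work argument is simpler and sidesteps this. It does not track mass per block; it just observes that the total popped mass equals $\|\vy\|_1 \le \|T_N(t\mP)\vs\|_1 = \psi_0(t) \le e^t$, since each pop adds its value to $\vy$ and $\vy$ is nondecreasing. Combining $\sum_l r(i_l,j_l) \le e^t$ with the pop threshold $r(i_l,j_l) \ge e^t\eps\, d_{i_l}/(2N\psi_{j_l})$ gives $\sum_l d_{i_l}/\psi_{j_l}(t) \le 2N/\eps$, and then the monotonicity $\psi_j(t)\le\psi_1(t)$ for $j\ge 1$ turns this directly into $\sum_l d_{i_l} \le 2N\psi_1(t)/\eps$. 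Your $M_j \le t^j/j!$ is actually stronger than the total-mass bound the paper needs (it implies $\sum_j M_j \le T_N \le e^t$), so you already have the right ingredient; the fix is to pair it with $\psi_j\le\psi_1$ rather than with the tail-sum identity.
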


Producing $\vx$ satisfying 
\[
\|\mD\iv\expof{-t(\mI - \mP)}\vs - \mD\iv\vx \|_{\infty} < \eps
\]
 is equivalent to producing $\vy$ satisfying 
 \[
 \|\mD\iv\expof{t\mP)}\vs - \mD\iv\vy \|_{\infty} < e^t \eps.
 \]
  We will show that the error vector in the \hkre steps, $\tnps - \vy$, satisfies $\| \mD\iv \tnps - \mD\iv\vy \|_{\infty} < e^t \eps/2$.

The following lemma expresses the error vector, $\tnps - \vy$, as a weighted sum of the residual blocks $\vr_k$ in the linear system \eqref{largelinsys}, and shows that the polynomials $\psi_k(t)$ are the weights.

\begin{lemma}\label{lem:psiweights}\label{lem:psi}
Let $\psi_k(t)$ be defined as in Section \ref{sec:alg}. Then in the notation of Section \ref{sec:linsys}, we can express the error vector of \hkre  in terms of the residual blocks $\vr_k$ as follows
\begin{equation}\label{err-res-sum}
\tnps - \vy = \sum_{k=0}^N \psi_k(t\mP) \vr_k
\end{equation}
\end{lemma}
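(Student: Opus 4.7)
The plan is to track how the residuals of the block linear system \eqref{largelinsys} propagate into the blockwise errors of the approximate solution, and then to sum the blockwise errors. Let $\mat{M}$ denote the block lower bidiagonal coefficient matrix in \eqref{largelinsys}, let $\hat{\vv}$ be the approximate solution produced by \hkre (with blocks $\hat{\vv}_0,\dots,\hat{\vv}_N$), and set $\vy = \sum_{k=0}^{N} \hat{\vv}_k$. Define the blockwise error $\mE_k = \vv_k - \hat{\vv}_k$ so that $\mat{M}\mE = \vr$ by definition of the residual. Because $\mat{M}$ is block lower bidiagonal with identity blocks on the diagonal and $-\tfrac{t}{k}\mP$ below the diagonal, reading off the block rows gives the one-step recurrence
\[
\mE_0 = \vr_0, \qquad \mE_k = \vr_k + \tfrac{t}{k}\mP\,\mE_{k-1} \quad (1 \le k \le N).
\]

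Next I would unroll this recurrence by induction on $k$ to obtain the closed form
\[
\mE_k \;=\; \sum_{j=0}^{k} \frac{t^{k-j}\,j!}{k!}\,\mP^{k-j}\,\vr_j,
\]
where the telescoping of the factors $\tfrac{t}{k}\cdot\tfrac{t}{k-1}\cdots\tfrac{t}{j+1}$ produces $\tfrac{t^{k-j} j!}{k!}$. Since $T_N(t\mP)\vs = \sum_k \vv_k$ and $\vy = \sum_k \hat{\vv}_k$, the total error is $T_N(t\mP)\vs - \vy = \sum_{k=0}^{N} \mE_k$. Substituting the closed form and swapping the order of summation with the change of variable $m = k-j$ gives
\[
\sum_{k=0}^{N} \mE_k
\;=\; \sum_{j=0}^{N} \sum_{m=0}^{N-j} \frac{t^{m}\,j!}{(m+j)!}\,\mP^{m}\,\vr_j
\;=\; \sum_{j=0}^{N} \psi_j(t\mP)\,\vr_j,
\]
where the last equality is just the definition \eqref{defpsi} of $\psi_j$ applied to the matrix $t\mP$. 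This is exactly \eqref{err-res-sum}.

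The whole argument is driven by the bidiagonal structure of $\mat{M}$ and a standard induction, so I do not anticipate a genuine obstacle; the only place that needs care is the bookkeeping that turns the product $\prod_{i=j+1}^{k}\tfrac{t}{i}$ into $\tfrac{t^{k-j} j!}{k!}$ and then matches the resulting coefficient with the definition of $\psi_j$. One subtle point to verify is that $\psi_k(t\mP)$ as a matrix polynomial is well defined and commutes with the $\mP^m$ factors in the derivation; this is immediate because $\psi_k$ is a polynomial in a single matrix $t\mP$, so no ordering issues arise. Once \eqref{err-res-sum} is established, it slots directly into the outline of Section~\ref{sec:relax} to give the $\infty$-norm bound $\sum_k \psi_k(t)\|\mD\iv \vr_k\|_\infty$ used in Theorem~\ref{thm:work}.
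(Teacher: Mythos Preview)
Your proposal is correct, and it takes a somewhat different route from the paper. The paper rewrites the block system in Kronecker form $(\mI - \mS \kron (t\mP))\vv = \ve_1\kron\vs$, inverts it using the Neumann series $\sum_{m=0}^{N} \mS^m \kron (t\mP)^m$ (citing \cite{Kloster-2013-nexpokit} for the explicit form of $\mS^m\ve_j$), and then reads off the coefficient of each $\vr_j$ from block-column $j$ of that sum. You instead exploit the bidiagonal structure directly via forward substitution, obtain the one-step recurrence $\mE_k = \vr_k + \tfrac{t}{k}\mP\,\mE_{k-1}$, unroll it, and swap the order of summation. Your argument is more elementary and entirely self-contained, avoiding both the Kronecker machinery and the external reference; the paper's approach, on the other hand, makes the global structure of the inverse more visible and connects naturally to the prior work it cites. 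The computations are equivalent---your product $\prod_{i=j+1}^{k}\tfrac{t}{i}=\tfrac{t^{k-j}j!}{k!}$ is exactly the entry $(\mS^{k-j})_{k+1,j+1}$ that the paper extracts---so neither approach gains any real generality over the other for this lemma.
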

\begin{proof}
Consider \eqref{largelinsys}. Recall that $\vv = [ \vv_0; \vv_1; \cdots; \vv_N]$ and let $\mS$ be the $(N+1)\times(N+1)$ matrix of 0s with first subdiagonal equal to $[\frac{1}{1},$ $\frac{1}{2},$ $\cdots,$ $\frac{1}{N}]$. Then we can rewrite this linear system more conveniently as 
\begin{equation}\label{linsys}
(\mI - \mS \kron (t\mP)) \vv = \ve_1\kron\vs.
\end{equation}
Let $\vv_k$ be the true solution vectors that the $\hvv_k$ are approximating in \eqref{linsys}. We showed in Section \ref{sec:linsys} that the error $\tnps - \vy$ is in fact the sum of the errors $\vv_k - \hvv_k$. Now we will express $\tnps - \vy$ in terms of the residual partitions, i.e. $\vr_k$.

At any given step we have
$\vr = \ve_1\kron\vs - (\mI - \mS \kron (t \mP)) \hvv$,
so pre-multiplying by $(\mI - \mS \kron (t \mP))\iv$ yields
$ (\mI - \mS \kron (t \mP))\iv \vr = \vv - \hvv,$
because $(\mI - \mS \kron (t \mP)) \vv = \ve_1 \kron \vs$ exactly, by definition of $\vv$. Note that $ (\mI - \mS \kron (t \mP))\iv \vr = \vv - \hvv$
is the error vector for the linear system \eqref{linsys}. From this, an explicit computation of the inverse (see~\cite{Kloster-2013-nexpokit} for details) yields
\begin{equation}\label{errorblocks}
\bmat{\vv_0 - \hvv_0 \\ \vdots \\ \vv_N - \hvv_N } =  \lp \sum_{k=0}^N \mS^k \kron (t \mP)^k \rp \fullvec{\vr}.
\end{equation}
For our purposes, the full block vectors $\vv,\hvv,\vr$ and their individual partitions are unimportant: we want only their sum, because $\tnps - \vy = \sum_{k=0}^N (\vv_k - \hvv_k)$, as previously discussed.

Next we use \eqref{errorblocks} to express $\sum_{k=0}^N (\vv_k - \hvv_k)$, and hence \[\tnps - \vy ,\] in terms of the residual blocks $\vr_k$.
We accomplish this by examining the coefficients of an arbitrary block $\vr_k$ in \eqref{errorblocks}, which in turn requires analyzing the powers of $(\mS\kron t\mP)$.

Fix a residual block $\vr_{j-1}$ and consider the product with a single term $(\mS^k\kron (t\mP)^k)$. Since $\vr_{j-1}$ is in block-row $j$ of $\vr$, it multiplies with only the block-column $j$ of each term $(\mS^k\kron (t\mP)^k)$, so we want to know what the blocks in block-column $j$ of $(\mS^k\kron (t\mP)^k)$ look like.

From~\cite{Kloster-2013-nexpokit} we know that
\[
\mS^m\ve_j =
 \begin{cases} 
      \frac{(j-1)!}{(j-1+m)!}\ve_{j+m}   & \text{if } 0\leq m \leq N+1-j \\
      0  & \text{otherwise.} \\
  \end{cases}
\] 
This means that block-column $j$ of $(\mS^m \kron(t\mP)^m)$ contains only a single nonzero block, $\frac{(j-1)!}{(j-1+m)!}  t^m \mP^m$, for all $0 \leq m \leq N+1-j$. Hence, summing the $n \times n$ blocks in block-column $j$ of each power $(\mS^m \kron (t\mP)^m)$ yields
\begin{equation}\label{rescoeff}
\sum_{m=0}^{N+1-j} \frac{(j-1)! t^m }{(j-1+m)!} \mP^m
\end{equation}
as the matrix coefficient of $\vr_{j-1}$ in the right-hand side expression of
$\sum_{k=0}^N (\vv_k - \hvv_k) = (\ve^T \kron \mI)(\sum_{m=0}^N \mS^m \kron (t \mP)^m ) \vr$.

  Substituting $k = j-1$ on the right-hand side of \eqref{rescoeff} yields
  \begin{align*}
  \sum_{k=0}^N (\vv_k - \hvv_k) &=  (\ve^T \kron \mI)\lp\sum_{m=0}^N \mS^m \kron (t \mP)^m \rp \vr \\
  &= \sum_{k=0}^N \left( \sum_{m=0}^{N-k} \frac{k! t^m }{(k+m)!} \mP^m \right) \vr_k \\
  &= \sum_{k=0}^N   \psi_k(t\mP) \vr_k,
\end{align*} as desired.
\end{proof}

Now that we've shown $\tnps - \vy = \sum_{k=0}^N   \psi_k(t\mP) \vr_k$ we  can upperbound $\| \mD\iv \tnps - \mD\iv\vy \|_{\infty}$. To do this, we first show that 
\begin{equation}\label{psiclaim}
\| \mD\iv\psi_k(t\mP) \vr_k \|_{\infty} \leq \psi_k(t)  \|\mD\iv\vr_k\|_{\infty}.
\end{equation}

To prove this claim, recall that $\mP = \mA \mD\iv$. Since $\psi_k(t)$ is a polynomial, we have $\mD\iv\psi_k(t\mP) = \mD\iv \psi_k(t\mA\mD\iv) = \psi_k(t \mD\iv \mA) \mD\iv$, which equals $\psi_k(t\mP^T)\mD\iv$.

Taking the infinity norm and applying the triangle inequality to the definition of $\psi_k(t)$ gives us
\begin{align*}
\|\mD\iv\psi_k(t\mP) \vr_k\|_{\infty} \leq \sum_{m=0}^{N-k} \frac{t^m k!}{(m+k)!} \| (\mP^T)^m (\mD\iv\vr_k) \|_{\infty} .
\end{align*}
Then we have
$
\|(\mP^T)^m (\mD\iv\vr_k) \|_{\infty} \leq \|(\mP^T)^m\|_{\infty} \|(\mD\iv\vr_k) \|_{\infty}
$
which equals $ \|(\mD\iv\vr_k) \|_{\infty}$ because $\mP^T$ is row-stochastic. Returning to our original objective, we then have
\begin{align*}
\|\mD\iv\psi_k(t\mP) \vr_k\|_{\infty} &\leq \sum_{m=0}^{N-k} \frac{t^m k!}{(m+k)!} \|(\mD\iv\vr_k) \|_{\infty}  \\
&= \psi_k(t) \| \mD\iv \vr_k \|_{\infty},
\end{align*}
proving the claim in \eqref{psiclaim}.

This allows us to continue:
\begin{align*}
 \tnps - \vy  &= \sum_{k=0}^N \psi_k(t\mP)\vr_k & \text{so}\\
 \| \mD\iv \tnps - \mD\iv\vy\|_{\infty} &\leq \sum_{k=0}^N\|\mD\iv \psi_k(t\mP)\vr_k\|_{\infty}\\
&\leq \sum_{k=0}^N \psi_k(t) \| \mD \iv\vr_k\|_{\infty}.
\end{align*}

The stopping criterion for \hkre requires that every entry of the residual satisfies $r(i,j) < \frac{ e^t \eps d_i }{2N\psi_j(t)}$, which is equivalent to satisfying $\frac{r(i,j)}{d_i} < \frac{e^t \eps}{2N \psi_j(t)}$. This condition guarantees that $\|\mD\iv \vr_k\|_{\infty} < \frac{e^t \eps}{2N\psi_k(t)}$. Thus we have
\begin{align*}
 \| \mD\iv\tnps - \mD\iv\vy\|_{\infty} &\leq \sum_{k=0}^N \psi_k(t) \| \mD \iv\vr_k\|_{\infty} \\
 & \leq \sum_{k=0}^N \lp \psi_k(t)\frac{e^t \eps}{2N\psi_k(t)} \rp
 \end{align*}
which is bounded above by $e^t \eps/2$. Finally, we have that \[ \| \mD\iv\tnps - \mD\iv\vy\|_{\infty} < e^t \eps/2 \] implies $\| \mD\iv ( \expof{-t(\mI-\mP)}\vs - \vx )\|_{\infty} < \eps/2$, completing our proof of the first part of Theorem \ref{thm:work}.
\vfill\eject
\paragraph{Bounding work} It remains to bound the work required to perform steps of \hkre until the stopping criterion is achieved.

Because the stopping criterion requires each residual entry to satisfy
\[
\frac{r(i,j)}{d_i} < \frac{e^t\eps}{2N\psi_j(t)}
\]
we know that each step of \hkre must operate on an entry $r(i,j)$ larger than this threshold. That is, each step we relax an entry of $\vr$ satisfying $\frac{r(i,j)}{d_i} \geq \frac{e^t\eps}{2N\psi_j(t)}$.

Consider the solution vector $\vy \approx \tnps$ and note that each entry of $\vy$ is really a sum of values that we have deleted from $\vr$. But $\vr$ always contains only nonnegative values: the seed vector $\vs$ is nonnegative, and each step involves setting entry $r(i,j) =0$ and adding a scaled column of $\mP$, which is nonnegative, to $\vr$. Hence, $\|\vy \|_1 $ equals the sum of the $r(i,j)$ added to $\vy$, $\sum_{l=1}^T r(i_l,j_l) = \|\vy \|_1$.

Finally, since the entries of $\vy$ are nondecreasing (because they only change if we add positive values $r(i,j)$ to them), we know that $\|\vy\|_1 \leq \|\tnps\|_1 = \psi_0(t) \leq e^t$. This implies, then, that
\[
\sum_{l=1}^T r(i_l,j_l) \leq e^t.
\]
Using the fact that the values of $r(i,j)$ that we relax must satisfy $ \frac{r(i,j)}{d_i} \geq \frac{e^t\eps}{2N\psi_j(t)}$ we have that
\[
\sum_{l=1}^T \frac{d_{i_l} e^t \eps}{2N \psi_{j_l}(t)} \leq e^t.
\]
Simplifying yields
\[
\sum_{l=1}^T \frac{d_{i_l}}{\psi_{j_l}(t)} \leq \frac{2N}{\eps}.
\]
By Lemma \ref{lem:psi} we know $\psi_k(t) \leq \psi_1(t)$ for each $k\geq 1$, so we can lowerbound
$\sum_{l=1}^T \frac{d_{i_l}}{\psi_1(t)} \leq \sum_{l=1}^T \frac{d_{i_l}}{\psi_{j_l}(t)} \leq \frac{2N}{\eps}$, giving us
\[
\sum_{l=1}^T d_{i_l} \leq \frac{2N\psi_1(t)}{\eps}.
\]

Finally, note that the dominating suboperation in each step of \hkre consists of relaxing $r(i,j)$ and spreading this ``heat kernel rank" to the neighbors of node $i$ in block $\vr_{j+1}$. Since node $i$ has $d_i$ neighbors, this step consists of $d_i$ adds, and so the work performed by \hkre is
$\sum_{l=1}^T d_{i_l}$, which is exactly the quantity we bounded above.

%
\end{document}